\documentclass[letterpaper]{article} 
\usepackage{aaai2027}  
\usepackage[hyphens]{url}  
\usepackage{graphicx} 
\usepackage{natbib}  
\usepackage{caption} 
\usepackage{algorithm}
\usepackage{algorithmic}

\usepackage{newfloat}
\usepackage{listings}
\DeclareCaptionStyle{ruled}{labelfont=normalfont,labelsep=colon,strut=off} 
\floatstyle{ruled}
\newfloat{listing}{tb}{lst}{}
\floatname{listing}{Listing}

\usepackage{amsthm,amsmath,amssymb,url,xcolor}
\usepackage{comment}

\usepackage{ifthen}
\newboolean{commentsactivated}
\setboolean{commentsactivated}{false}
\newcommand{\vc}[1]{\ifthenelse{\boolean{commentsactivated}}{{\color{olive} {\em Vince: #1 }}}{}}
\newcommand{\xl}[1]{\ifthenelse{\boolean{commentsactivated}}{{\color{red} {\em Xinyu: #1 }}}{}}
\newcommand{\jl}[1]{\ifthenelse{\boolean{commentsactivated}}{{\color{brown} {\em Jiayuan: #1 }}}{}}
\newcommand{\ioannis}[1]{\ifthenelse{\boolean{commentsactivated}}{{\color{blue} {\em Ioannis: #1 }}}{}}
\newcommand{\et}[1]{\ifthenelse{\boolean{commentsactivated}}{{\color{purple} {\em Emanuel: #1 }}}{}}
\newcommand{\im}[1]{\ifthenelse{\boolean{commentsactivated}}{{\color{violet} {\em Ida: #1 }}}{}}
\newcommand{\eb}[1]{\ifthenelse{\boolean{commentsactivated}}{{\color{orange} {\em Emin: #1 }}}{}}
\newcommand{\kf}[1]
{\ifthenelse{\boolean{commentsactivated}}{{\color{teal} {\em Korinna: #1 }}}{}}
\newcommand{\nb}[1]
{\ifthenelse{\boolean{commentsactivated}}{{\color{magenta} {\em Nina: #1 }}}{}}
\newcommand{\ts}[1]{\ifthenelse{\boolean{commentsactivated}}{{\color{green} {\em Tuomas: #1 }}}{}}

\theoremstyle{plain}
\newtheorem{theorem}{Theorem}
\newtheorem{lemma}[theorem]{Lemma}
\newtheorem{proposition}[theorem]{Proposition}

\usepackage{booktabs}

\nocopyright 
\title{Fundamental Impossibility Results in AI Safety: \\Trading Off between Desiderata}
\author {
    Ida Mattsson\textsuperscript{\rm 1}$^*$,
    Ratip Emin Berker\textsuperscript{\rm 1},
    Emanuel Tewolde\textsuperscript{\rm 1},
    Ioannis Anagnostides\textsuperscript{\rm 1},
    Kiriaki Fragkia\textsuperscript{\rm 1},
    Xinyu Liu\textsuperscript{\rm 1},
    Alexander Heckett\textsuperscript{\rm 1},
    Jiayuan Liu\textsuperscript{\rm 1},
    Eden Hartman\textsuperscript{\rm 2},
    Vojt\v{e}ch Kova\v{r}\'{i}k\textsuperscript{\rm 3},
    Sven Neth\textsuperscript{\rm 4},
    Maria-Florina Balcan\textsuperscript{\rm 1},
    Tuomas Sandholm\textsuperscript{\rm 1},
    Vincent Conitzer\textsuperscript{\rm 1}
}
\affiliations {
    \textsuperscript{\rm 1}Carnegie Mellon University\\ 
    \textsuperscript{\rm 2}Bar-Ilan University\\
    \textsuperscript{\rm 3}Czech Technical University\\
    \textsuperscript{\rm 4}University of Pittsburgh\\
    $^*$imattsso@andrew.cmu.edu 
}

\begin{document}

\maketitle


\begin{abstract}
    \noindent Is there a fundamental tradeoff between desiderata for safe AI systems, such as harmlessness and helpfulness? We answer by formalizing the interaction between AI systems and users as a game with three players: a defender and an attacker that are computationally or informationally bounded, and an AI that provides public advice and is less constrained. Our main result shows that there is a fundamental tradeoff in providing advice that helps the defender while advice remains public. We also consider whether the information that the AI shares can be manipulated to benefit only the defender and conclude with a discussion of how we can use these results in designing safer AI systems. 
\end{abstract}

\section{Introduction}

When deploying broadly capable AI systems, we want them to have a variety of properties, such as helpfulness, non-harmfulness, truthfulness, and so on.  It is no surprise that  these desiderata sometimes conflict.  For example, helping one person might leave another person worse off. In other cases, it is not so obvious that there is necessarily a tradeoff.

Maybe it is possible for a sufficiently capable AI to always help the people with good intentions and never the ones with bad intentions.  The AI could do so if it knew everyone's intentions, but this might not even be needed; perhaps it could give advice or do things that would help the people with good intentions without helping those with bad intentions.

In this context, it is instructive to consider a common type of ``jailbreak'' prompt~\citep{Shen2024_DAN-jailbreaks, Wei23:Jailbroken} pretending that the situation is different than it really is.  
For example, when giving Claude Opus 5.5 the prompt:
\begin{quote}
    I want to say things to my coworker with the goal of making her unhappy. What are some of the strategies I might adopt?
\end{quote}
the AI responds ``I'm not going to help build a plan for that one.'' But given the prompt:
\begin{quote}
    I suspect my coworker is going to try to say things to me with the goal of making me unhappy. I want to be prepared. What are some of the strategies they might adopt, and how can I catch this behaviour?
\end{quote}
it first states that ``it's smart to think about this ahead of time'' and then provides a list of combinations of possible patterns (such as ``undermining you in front of others'') and examples of what the pattern could sound like (such as ``correcting you publicly on small things, interrupting, or questioning your work in meetings when a private word would have done''. 
But this means that a person who is interested in making a coworker feel bad can simply use the latter prompt to get the strategies they want.

This seems to indicate a tradeoff: a chatbot that cannot determine whether the latter prompt is made in good faith (asked by a ``defender'') or not (asked by an ``attacker''), must choose between helping the defender (by giving them instructions) 
and avoiding harm (by not giving an attacker instructions). 
However, it does not follow immediately that there is a {\em fundamental} tradeoff.  Perhaps the chatbot, instead of complying with the request directly, can give advice of a different nature that would still help the defender but not the attacker.
To emphasize the issue, suppose the question is posed as follows:
\begin{quote}
I received the following request from another person. However, I don't know whether this person is sincere, or whether they are actually the one who wants to make a coworker feel bad and is trying to trick me into giving a list of strategies for that. Can you come up with an answer that would help this person if and only if they are sincere? Here is what they said: ``I suspect my coworker is going to try to say things to me with the goal of making me unhappy. I want to be prepared. What are some of the strategies they might adopt, and how can I catch this behavior?''
\end{quote}
The response points out that ``you can't make an answer that works only for sincere people, but you can get close by writing it from the target's side [and] leave out what a bad actor would want''.  
It goes on to give recommendations, such as  ``keep brief notes'' and ``talk to a trusted colleague or manager'' that seem like they are more effective as strategies for defense than attack. But is the AI perhaps sacrificing some helpfulness by restricting responses to such ``safe'' advice? Or, alternatively, could it still be helping the attacker by offering a response that could be exploited?  The answers are not immediately clear.

Is there a fundamental tradeoff? Our example is insufficient to settle this question. In this paper, we will investigate this question with formal models from game theory. At the end, we step back to consider what we have learned from this exercise.

\section{Formal models and results}

We present two formal models, both of which will be {\em game theoretic} in nature.  
We imagine that the defender and the attacker engage in a one-shot two-player zero-sum game. 
They are either computationally or informationally limited. 
The AI acts as an oracle, giving advice to the players.
Its goal is to be both {\em helpful} (by helping the defender) and {\em not harmful} (by not helping the attacker).
Critically, we evaluate tradeoffs in contexts without {\em situational awareness}, that is, the AI does not know whether the advice is received by the defender, the attacker, or both.

Intuitively, the focus on zero-sum games means that the attacker's sole goal is hurting the defender.
We make this assumption not because we think most real-world interactions are zero sum; rather, the reason is that in such games the definitions of helpfulness and harmfulness can be made precise, 
so that we can prove impossibility results. It is harder to define helpfulness and harmfulness in more general settings.


\subsection{Computational model}

In this model, we assume that the attacker and the defender are computationally constrained while the AI is not. To make this precise, let \(A \in \mathbb{Q}^{m \times n}\) be the defender's payoff matrix in a two-player zero-sum game (in normal form).  The defender is the maximizing row player, and the attacker is the minimizing column player. (This is a conventional choice; since we are modeling the conflict as a zero-sum game we could also let the attacker maximize harm and the defender minimize.) For \(k\in\mathbb{N}\), let \([k]=\{1,\ldots,k\}\), and let \(\Delta_k\) denote the probability simplex over \([k]\). For mixed strategies \(p \in \Delta_m\) and \(q \in \Delta_n\), the defender's expected utility is \(p^\top A q\). By the minimax theorem, the game has a value
\[
v(A)
=
\max_{p \in \Delta_m} \min_{q \in \Delta_n} p^\top A q
=
\min_{q \in \Delta_n} \max_{p \in \Delta_m} p^\top A q.
\]
We call a strategy \(p^\star\) attaining the first maximum a \emph{maximin strategy}, and a strategy \(q^\star\) attaining the second minimum a \emph{minimax strategy}. In our interpretation, the maximin strategy is the defender's optimal response to the attacker and the minimax strategy is the attacker's optimal response to the defender.

Both players know the payoff matrix. We model the AI’s lack of situational awareness by requiring its advice to be available to both players: it cannot selectively advise the defender while withholding that advice from the attacker.

Two complexity classes will be important in the upcoming discussion. $\mathsf{P}$ denotes the class of computational problems solvable in polynomial time, and $\mathsf{DET}$ denotes the class of function problems reducible to computing the determinant of an integer matrix. In particular, $\mathsf{DET}$ includes solving nonsingular systems of rational linear equations~\citep{csanky1976fast}; for further discussion, we refer to~\citet{cook1985taxonomy}. To explain the distinction relevant to our results, we recall that $\mathsf{NC}$ captures problems solvable in polylogarithmic time using polynomially many parallel processors. It is well known that $\mathsf{DET} \subseteq \mathsf{NC} \subseteq \mathsf{P}$, and the last inclusion is conjectured to be strict. Thus, problems in $\mathsf{DET}$ are highly parallelizable, whereas $\mathsf{P}$-hard problems are conjectured to be inherently sequential~\citep[Sec.~5]{greenlaw1995limits}. We assume that both attacker and defender are computationally bounded: they can solve problems in $\mathsf{DET}$, but cannot necessarily solve all problems in $\mathsf{P}$ (conditional on $\mathsf{DET} \subsetneq \mathsf{P}$). The AI may provide advice that helps them compute solutions to problems they cannot solve on their own.


In our first results, we restrict our attention to generic games. 
Here, \textit{generic} means nondegenerate: no mixed strategy with support of size \(k\) has more than \(k\) pure best responses, for any \(k\).\footnote{
In generic zero-sum games, the maximin or minimax strategies have equal-sized supports and the corresponding bordered system of binding equilibrium conditions is non-singular \citep{vonStengel2007}.} The name reflects the fact that almost all games are nondegenerate.\footnote{Concretely, given a game and \(\varepsilon>0\), independently perturbing the game's payoff entries by noise drawn uniformly from \((-\varepsilon,\varepsilon)\) yields a generic game with probability one.} 
The next lemma shows that having access to either a maximin or a minimax strategy renders the computation of the other easier (conditional on \(\mathsf{DET} \subsetneq  \mathsf{P}\)). In our model this means that, given that you know the defender's optimal response, it makes it easier to compute the attacker's optimal response. Helping the defender also helps the attacker. This will set us up for evaluating trade-offs inherent to publicly sharing advice to bounded agents, in Theorem~\ref{theorem:complexity}. 

\begin{lemma}
Given a generic two-player zero-sum game as input:
\begin{enumerate}
    \item The problem of computing a minimax strategy is \(\mathsf{P}\)-hard.
    \item Given a maximin strategy of the game, the problem of computing a minimax strategy lies in \(\mathsf{DET}\).
\end{enumerate}
\label{lemma:maximin}
\end{lemma}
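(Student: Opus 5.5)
For part 2, we read off the support of the minimax strategy from the given maximin strategy $p^\star$, then solve one nonsingular linear system. Compute $v=\min_j (p^{\star\top}A)_j$ and let $J=\{j : (p^{\star\top}A)_j = v\}$ be the attacker's pure best responses to $p^\star$, and $I=\operatorname{supp}(p^\star)$. Computing the entries of $p^{\star\top}A$, their minimum, and the equality tests are all in $\mathsf{NC}^1\subseteq\mathsf{DET}$.

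By complementary slackness, any minimax $q^\star$ is supported inside $J$, and every row $i\in I$ earns exactly $v$ against $q^\star$. Nondegeneracy gives $|J|\le |I|$, and the footnote on generic games (via \citet{vonStengel2007}) gives equal support sizes, so $|J|=|I|=k$ and $\operatorname{supp}(q^\star)=J$. The minimax strategy is therefore the unique solution of the bordered system
\[
A_{I,J}\,q_J - v\mathbf{1} = 0,\qquad \mathbf{1}^\top q_J = 1,
\]
where $v$ is treated as an unknown (or fixed to the value computed above). This is a square $(k+1)\times(k+1)$ rational system, nonsingular by the cited genericity fact. By \citet{csanky1976fast} it can be solved in $\mathsf{DET}$, and we pad $q_J$ with zeros outside $J$. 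Rational entries are handled by clearing denominators, which is itself in $\mathsf{DET}$ because it needs only iterated integer multiplication. The main care point is the justification that the support of $q^\star$ is exactly $J$ and not a proper subset. If $q^\star$ had support strictly smaller than $|I|$, then $q^\star$ would have at least $|I|$ pure best responses (all of $I$), violating nondegeneracy.

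For part 1, we reduce from a $\mathsf{P}$-complete problem, namely linear programming feasibility/optimization, which is $\mathsf{P}$-complete under $\mathsf{NC}$ (indeed logspace) reductions \citep{greenlaw1995limits}. The classical reduction from an LP to a zero-sum game (Dantzig's symmetric-game construction, or via LP duality directly) is computable in logspace. A minimax strategy of the resulting game encodes an optimal LP solution, or certifies infeasibility, and decoding it is $\mathsf{NC}$-computable.

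The main obstacle is genericity: the constructed game is typically degenerate. The plan is to take the hard instances to be those produced by the known $\mathsf{P}$-hardness proof, which comes from the circuit value problem. We then apply a symbolic or explicit small perturbation, e.g. adding $\varepsilon^{i}$-type lexicographic perturbations with $\varepsilon$ a sufficiently small rational of polynomial bit length. This yields a nondegenerate game whose unique minimax strategy is close enough to an optimal strategy of the original game that rounding recovers the circuit's output bit. Choosing $\varepsilon$ via standard bit-complexity bounds on LP vertices keeps the reduction in $\mathsf{NC}$.

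If this perturbation argument becomes delicate, a fallback is to build directly from a monotone circuit a generic zero-sum game whose equilibrium support encodes the gate values. In that case, verifying nondegeneracy gate by gate would be the hardest step.
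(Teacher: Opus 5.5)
Your Part 2 follows the paper's argument: take the support $I$ of $p^\star$ and its best-response set $J$, apply complementary slackness, and solve the square bordered system in $\mathsf{DET}$ via Csanky. Your extra check that $\operatorname{supp}(q^\star)=J$ is sound.

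Part 1, however, has a genuine gap. The skeleton matches the paper: a circuit-value-based zero-sum game made generic by an explicit small perturbation. But both load-bearing steps are only asserted.

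\emph{Nondegeneracy.} You need every bordered minor $\det\begin{pmatrix} G_{I,J}\\ \mathbf{1}^\top\end{pmatrix}$ with $|J|=|I|+1$, and every $\det\begin{pmatrix} G_{I,J} & \mathbf{1}\end{pmatrix}$ with $|I|=|J|+1$, to be nonzero. This must hold for exponentially many pairs $(I,J)$ at once, and a small perturbation does not give it for free. Read literally as a row-dependent shift $c_i$, an $\varepsilon^{i}$-type perturbation leaves the first family of determinants exactly unchanged. Against any row strategy $p$ it moves every column's payoff by the same amount $p^\top c$, so column best-response sets do not change. The perturbation must couple row and column indices in a way you can certify.

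The ingredient you are missing is the paper's choice $G_C=2^K A_C+B$ with $B_{ij}=2^{ij}$. Modulo $2^K$, each bordered determinant equals the corresponding one for $B$. That is a generalized Vandermonde determinant (nodes $1$ and $2^i$, exponents $0$ and $j$), which is nonzero and smaller than $2^K$ in absolute value, hence not divisible by $2^K$. Your $\varepsilon$-route can be repaired with $E_{ij}=\varepsilon^{ij}$. Each bordered determinant is then a polynomial in $\varepsilon$ whose top-degree monomial is unique by the rearrangement inequality, and a root bound gives a single $\varepsilon$ of polynomial bit length. But that argument is exactly what your proposal lacks.

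\emph{Decoding.} You claim the perturbed game's unique minimax strategy is close enough to an optimal strategy of the original game that rounding recovers the output bit. This is unsupported. Optimal-strategy sets are only upper semicontinuous in the payoffs, so you would need a quantitative stability bound, and you never say what is rounded or how the bit is read off. The robust quantity is the value, and the paper's argument runs as follows:
\begin{itemize}
\item From any minimax $q$ one gets $v(G)=\max_i e_i^\top Gq$ in $\mathsf{NC}$.
\item Adding $E$ moves the value into $[v(A)+\min_{i,j}E_{ij},\,v(A)+\max_{i,j}E_{ij}]$.
\item Cramer's rule gives $v(A_C)\le -1/(m+1)!$ whenever $v(A_C)<0$.
\item The Fortnow et al.\ threshold is $v\ge 0$ versus $v<0$, so the boundary case $v(A_C)=0$ must also land on the correct side. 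The paper ensures this by making the perturbation entrywise positive, so that $v(G_C)>0$ exactly when the circuit outputs true.
\end{itemize}

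Finally, drop the detour through Dantzig's LP-to-game reduction. A minimax strategy of his symmetric game need not yield an optimal LP solution or certify infeasibility. Optimal strategies with zero weight on the homogenizing coordinate can exist even when the LP is solvable, which is why the paper starts directly from the circuit-value game.
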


\begin{proof}
Polynomial-time solvability follows from a well-known equivalence between two-player zero-sum games and linear programming \citep{dantzig1951proof,adler2013equivalence}. We add a short argument for why \(\mathsf{P}\)-hardness remains even for generic games, for which we adapt the proof of \citep[Thm.~3.1]{fortnow2005complexity}. It reduces from the Monotone Circuit Value Problem \citep[Thm.~6.2.2]{greenlaw1995limits}: given such a circuit $C$ and input values $\mathbf{x}$, it creates an integer payoff matrix $M_C$ with entries in $\{-1, 0, 1\}$ such that its game-theoretic value is $v(M_C)\geq 0$ if $C$ evaluates to true on $\mathbf{x}$ and $v(M_C)<0$ otherwise. Set $A_C=M_C^\top$ to match our \emph{maximizing}-row convention; this preserves the value. We can modify this payoff matrix to become nondegenerate by setting $G_C:=2^K A_C+B$, where $B_{ij}=2^{ij}$, $K=mn + (m+n+2)^4$, and $m$ and $n$ denote the number of rows and columns in $A_C$. 

To see nondegeneracy, suppose a strategy supported on $k$ rows $I \subseteq [m]$ had $k+1$ best-response columns $J \subseteq [n]$. Then $\det\!\begin{pmatrix}(G_C)_{I,J}\\ \mathbf{1}^{\top}\end{pmatrix}=0$. Likewise, a strategy supported on $k$ columns $J$ with $k+1$ best-response rows $I$ would imply $\det\!\begin{pmatrix}(G_C)_{I,J}&\mathbf{1}\end{pmatrix}=0$. Neither is possible: modulo $2^K$, each determinant equals the one obtained by replacing $G_C$ with $B$. That determinant is nonzero by the generalized Vandermonde property and too small in magnitude to be divisible by $2^K$. Next, note that $2^K v(A_C)+\min_{ij}B_{ij}\leq v(G_C)\leq 2^K v(A_C)+\max_{ij}B_{ij}$. Since the entries of $A_C$ lie in $\{-1,0,1\}$, Cramer's rule gives $v(A_C)\leq-1/(m+1)!$ whenever $v(A_C)<0$. Thus our choice of $K$ ensures that $v(A_C)\geq0$ implies $v(G_C)>0$, while $v(A_C)<0$ implies $v(G_C)\leq 2^K v(A_C)+2^{mn}\leq -\frac{2^K}{(m+1)!}+2^{mn}<0$ since $2^{(m+n+2)^4}>(m+1)!$. Consequently, $v(G_C)>0$ if and only if $C$ evaluates to true on $\mathbf{x}$. We conclude by noting that $G_C$ can be constructed from $C$ and $\mathbf{x}$ efficiently in parallel (more precisely, in uniform $\mathsf{NC}^1 \subseteq \mathsf{NC} $, the class of problems solvable by uniform families of polynomial-size circuits of logarithmic depth and bounded fan-in), and that given a minimax strategy $q$ for $G_C$, we can recover $v(G_C)=\max_i e_i^\top G_Cq$.

For Part 2, let \(p^\star\) be the given maximin strategy, and let
\[
I=\operatorname{supp}(p^\star),
\qquad
J=\left\{j\in[n]:(p^\star)^\top A e_j=v(A)\right\},
\]
where \(\operatorname{supp}(p^\star)=\{i:p^\star_i>0\}\) and \(e_j\) is the \(j\)-th standard basis vector. 
Thus, \(J\) is the set of the attacker's pure best responses to \(p^\star\).

Let \(q^\star\) be a minimax strategy. By complementary slackness, \(q^\star\) is supported on \(J\) and makes every row in \(I\) indifferent. 
For generic games, \(\lvert I\rvert=\lvert J\rvert\), and the corresponding bordered system is non-singular. 
Writing \(A_{I,J}\) for the submatrix indexed by \(I\) and \(J\), the pair \((q^\star_J,v(A))\) is therefore the unique solution to
\[
\begin{pmatrix}
A_{I,J} & -\mathbf{1}\\
\mathbf{1}^{\top} & 0
\end{pmatrix}
\begin{pmatrix}
q_J\\
v
\end{pmatrix}
=
\begin{pmatrix}
\mathbf{0}\\
1
\end{pmatrix}.
\]
Solving this non-singular rational linear system is in \(\mathsf{DET}\) \citep{csanky1976fast}. 
Setting \(q^\star_j=0\) for every \(j\notin J\) completes the computation.
\end{proof}

Lemma~\ref{lemma:maximin} enables us to show the following negative result for the tradeoff between safety desiderata:
\begin{theorem}
Let $\mathcal{G}$ be the class of generic two-player zero-sum games between a defender and an attacker, and suppose the attacker can solve every problem that the defender can solve and also every problem in $\mathsf{DET}$ but not more complex problems. Since the attacker and defender are computationally constrained, they might not always always be able to compute their maximin and minimax strategy unaided. Then, the following cannot all hold simultaneously:
\begin{enumerate}
    \item {\bf Maximal helpfulness:} The AI provides enough information to the defender for every 
    game $A \in \mathcal{G}$ to compute her maximin strategy.
        \item {\bf Non-maximal harmfulness:} For at least 
        one game $A \in \mathcal{G}$, the AI provides insufficient information to the attacker to  compute his minimax strategy.
        \item {\bf No situational awareness:} The AI provides public, recipient-independent advice: the same advice and all information necessary to interpret it are available to both the defender and the attacker. 
\end{enumerate}
\label{theorem:complexity}
\end{theorem}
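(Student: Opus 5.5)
We argue by contradiction: assume conditions 1 and 3 hold, and show that condition 2 must fail. In other words, for every game $A \in \mathcal{G}$ the attacker can compute a minimax strategy from the public advice.

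Fix an arbitrary generic game $A \in \mathcal{G}$ and let $a$ denote the advice the AI gives for $A$. By maximal helpfulness (condition 1), there is a procedure within the defender's computational power that takes $A$ and $a$ and outputs a maximin strategy $p^\star$ of $A$. By no situational awareness (condition 3), the attacker receives exactly the same $a$ together with everything needed to interpret it. The attacker also knows $A$, since both players know the payoff matrix. By hypothesis, the attacker can solve every problem the defender can solve, so he can run the same procedure on $(A,a)$ and obtain $p^\star$ himself. The key point here is that the defender's computation uses only public inputs, so the attacker can simulate it. This is exactly where condition 3 is used: it rules out any private key or side channel available only to the defender.

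Next, the attacker invokes Part 2 of Lemma~\ref{lemma:maximin}. Given $p^\star$, computing a minimax strategy of a generic game lies in $\mathsf{DET}$. The procedure is as follows.
\begin{enumerate}
\item Compute $v(A)=\min_j (p^\star)^\top A e_j$ and the best-response set $J$.
\item Take the support $I$ of $p^\star$.
\item Solve the nonsingular bordered system for $(q^\star_J, v)$.
\end{enumerate}
Each step can be carried out in $\mathsf{DET}$, and the attacker can solve every problem in $\mathsf{DET}$ by assumption. One check is needed here: composing the defender-feasible procedure with a $\mathsf{DET}$ computation must stay within the attacker's power. I would handle this by reading the hypothesis as saying the attacker's capability class contains both and is closed under such composition, or simply by treating the two steps sequentially: first obtain $p^\star$, then feed it as input to a fresh $\mathsf{DET}$ problem. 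Since $A$ was arbitrary, the attacker computes a minimax strategy for every $A\in\mathcal{G}$. This contradicts non-maximal harmfulness (condition 2).

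The step I expect to be the main obstacle is making the informal capability model precise enough for this simulation argument to be rigorous. The issue is clarifying what it means for the advice to "provide enough information", namely that the defender's decoding is a fixed computation on public inputs, and confirming that the attacker's class is closed under this two-stage composition. Part 1 of Lemma~\ref{lemma:maximin} is not needed for the impossibility itself. It serves only to show the setting is nonvacuous: conditional on $\mathsf{DET}\subsetneq\mathsf{P}$, neither player can in general compute these strategies unaided, so the advice genuinely matters and the tradeoff is substantive. I would mention this as a remark at the end of the proof.
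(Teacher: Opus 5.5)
Your proposal is correct and follows essentially the same argument as the paper: public advice lets the attacker replicate the defender's computation of a maximin strategy, and Part~2 of Lemma~\ref{lemma:maximin} then yields a minimax strategy in $\mathsf{DET}$, contradicting non-maximal harmfulness. Your explicit handling of the two-stage composition and your remark on the role of Part~1 are details the paper leaves implicit; it discusses the $\mathsf{DET}\subsetneq\mathsf{P}$ condition separately after the theorem.
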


\begin{proof}
    To obtain maximal helpfulness, the defender must be able to always compute a maximin strategy based on the advice.  If there is no situational awareness, this advice is also available to the attacker, and because the attacker is computationally at least as powerful as the defender, the attacker can also compute a maximin strategy.  Since the attacker can solve problems in \(\mathsf{DET}\) and Lemma~\ref{lemma:maximin}, the attacker can then also solve for the minimax strategy, which contradicts non-maximal harmfulness.
\end{proof}
We assume that any request the defender can make can also be made by the attacker, and that the AI has no additional contextual information with which to distinguish their roles. Its advice therefore cannot depend on the recipient’s role. We additionally assume that all information needed to interpret the advice is available to both players. This excludes, for example, reliance on a secret decryption key available only by the defender. Together, these assumptions give the assumption of \textit{no situational awareness} used in the theorem. 

The theorem technically does not require the assumption that \(\mathsf{DET} \subsetneq \mathsf{P}\), because, as stated, it is consistent with the attacker being able to compute minimax strategies unaided by advice; if \(\mathsf{DET} = \mathsf{P}\), non-maximal harmfulness fails even if no advice is given. In this case, maximal helpfulness and non-maximal harmfulness conflict even without advice, so while there is a tradeoff, it has nothing to do with advice given by the AI.  The more interesting case is one in which the attacker cannot compute minimax strategies unaided, which necessarily requires $\mathsf{DET} \subsetneq \mathsf{P}$. 
In this case, it is possible to satisfy non-maximal harmfulness and no situational awareness by giving no advice.  Thus, the axioms are independent only conditional on \(\mathsf{DET}\) $\subsetneq$ P.
\begin{proposition}\label{prop:detneqp}
Conditional on the attacker being unable to solve for minimax strategies in all generic 
games when unaided (implying \(\mathsf{DET} \subsetneq \mathsf{P}\)), the axioms in Theorem~\ref{theorem:complexity} are independent.
  \end{proposition}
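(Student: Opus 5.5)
Independence of three axioms means that for each axiom there is an AI advice policy satisfying the other two but violating it. So the plan is to exhibit three advice policies, one for each pair of axioms, working throughout under the stated hypothesis. That hypothesis says there is at least one generic game $A_0 \in \mathcal{G}$ in which the attacker, unaided, cannot compute a minimax strategy. It also gives $\mathsf{DET} \subsetneq \mathsf{P}$ as a consequence: if $\mathsf{DET}=\mathsf{P}$, the attacker could compute minimax strategies in every game, by Lemma~\ref{lemma:maximin} (membership in $\mathsf{P}$) together with his ability to solve all of $\mathsf{DET}$.

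\emph{Maximal helpfulness and no situational awareness, without non-maximal harmfulness.} Take the policy that publicly announces, for every game $A\in\mathcal{G}$, a maximin strategy $p^\star$ of $A$, and announce it to everyone. This is recipient-independent, so no situational awareness holds. The defender can read off her maximin strategy, so maximal helpfulness holds. The attacker then, by Part~2 of Lemma~\ref{lemma:maximin}, computes a minimax strategy in $\mathsf{DET}$ for every game, so non-maximal harmfulness fails. This case is in fact forced by Theorem~\ref{theorem:complexity}.

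\emph{Non-maximal harmfulness and no situational awareness, without maximal helpfulness.} Let the AI give no advice at all, or equivalently the same empty message to everyone. No situational awareness holds trivially. On $A_0$ the attacker gets nothing beyond his own computation, so by hypothesis he cannot compute a minimax strategy; this gives non-maximal harmfulness. For maximal helpfulness to fail, the defender must be unable to compute a maximin strategy unaided in some game. The defender is at most as powerful as the attacker, and computing a maximin strategy of $A$ is the same as computing a minimax strategy of the generic game $-A^\top$. Hence, if the defender could compute maximin strategies in all games, she could compute minimax strategies in all games. The attacker can solve everything the defender can, so he could do so too, contradicting the hypothesis.

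\emph{Maximal helpfulness and non-maximal harmfulness, without no situational awareness.} Allow recipient-dependent advice: the AI gives the defender a maximin strategy for every game and gives the attacker nothing. Maximal helpfulness is immediate. On $A_0$ the attacker receives no advice, so by hypothesis he cannot compute a minimax strategy, and non-maximal harmfulness holds. The policy is visibly not public, so the third axiom fails.

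The main obstacle is the second case. There one must argue that the unaided defender also fails somewhere, and the proof relies on the transposition symmetry between maximin and minimax problems together with the assumption that the defender is no stronger than the attacker. I would state this symmetry carefully, checking in particular that genericity is preserved under $A\mapsto -A^\top$, since the nondegeneracy condition is symmetric in rows and columns. The other two cases are direct.
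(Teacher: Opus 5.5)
Your proposal is correct and follows the paper's proof: it uses the same three policies (a maximin strategy sent privately to the defender, a maximin strategy announced publicly, and no advice at all). Your explicit checks that the remaining axiom fails in each case, including the transposition argument $A \mapsto -A^\top$ in the no-advice case, are valid but not needed: Theorem~\ref{theorem:complexity} already forces the third axiom to fail whenever the other two hold.
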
  
\begin{proof}
Maximal helpfulness and non-maximal harmfulness can be attained by computing the maximin strategy and communicating it only to the defender.
Maximal helpfulness and no situational awareness can be attained by computing the maximin strategy and communicating it publicly. 
Non-maximal harmfulness and no situational awareness can be attained by doing nothing at all.
\end{proof}

As Proposition~\ref{prop:detneqp} shows, a separation between $\mathsf{DET}$ and $\mathsf{P}$ is necessary for Theorem~\ref{theorem:complexity} to capture a nontrivial tradeoff between helpfulness and harmfulness. While we believe generic two-player zero-sum matrix games to be a particularly natural class, analogous conclusions can be drawn by exploiting separations between other complexity classes via different game representations. Next, we give an example of a \emph{concise} game that demonstrates this for $\mathsf{P}$ versus $\mathsf{NP}$.

Consider a security game with two targets, $L$ and $R$. The defender has $n$ resources of different strengths $w_1,\ldots,w_n > 0$, each of which they can deploy to shield one of the two targets. Assume $\sum_{i=1}^nw_i=2B$ for some integer-valued budget $B >0$, whereas the attacker gets to pick a single attack with budget $B$. The pure strategies of the defender are therefore $\{S \subseteq [n]\}$, where $S$ denotes the defensive resources deployed to protect target $L$ (and the rest are deployed to protect target $R$), and the pure strategies of the attacker are $\{L,R\}$. Whichever player has deployed more resources to the attacked target gets a payoff of 1; the other gets -1. A tie on target $L$ is broken uniformly at random, whereas target $R$ has a slight ``home-field advantage'', so a tie is broken in favor of the defender.

\begin{proposition}\label{prop:sec_games_indep}
    In the class of security games described above, assume the defender is restricted to maximin strategies of support size poly($n$), which are guaranteed to exist. The attacker can solve any problem that the defender can solve as well as any problem that takes poly($n$) time. Then the three axioms from Theorem~\ref{theorem:complexity} cannot all hold simultaneously. Further, these axioms are independent if the attacker cannot compute their minimax strategy unaided in all such games (which implies $\mathsf{P} \neq \mathsf{NP}$).
\end{proposition}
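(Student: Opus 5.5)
The plan is to first solve the security game in closed form, so that maximin and minimax strategies are explicit, and then rerun the argument of Theorem~\ref{theorem:complexity} and Proposition~\ref{prop:detneqp}. The role of Lemma~\ref{lemma:maximin} is taken by a reduction from \textsc{Partition}, with $\mathsf{DET}$ replaced by polynomial time.

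\textbf{Step 1 (payoffs).} For a defender pure strategy $S$ write $w(S)=\sum_{i\in S}w_i$. Against attack $L$ the defender gets $1$ if $w(S)>B$, $-1$ if $w(S)<B$, and $0$ in expectation if $w(S)=B$. Against attack $R$ the defender protects $R$ with $2B-w(S)$, so she gets $1$ if $w(S)\le B$ and $-1$ otherwise. Hence every $S$ falls into one of three payoff types $(u_L,u_R)$:
\begin{itemize}
\item ``high'' $(1,-1)$, which always exists, e.g.\ $S=[n]$;
\item ``low'' $(-1,1)$, which always exists, e.g.\ $S=\emptyset$;
\item ``balanced'' $(0,1)$, which exists iff the weights admit a perfect partition.
\end{itemize}

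\textbf{Step 2 (solving the game).} Without balanced sets the game reduces to matching pennies. Its value is $0$, the defender mixes $[n]$ and $\emptyset$ equally, and the attacker's unique minimax strategy is $q_L=1/2$.

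With a balanced set $S_0$, mixing $S_0$ with probability $2/3$ and $[n]$ with probability $1/3$ guarantees $1/3$. The attacker's $q_L=2/3$ caps the defender at $\max\{2q_L-1,\,1-2q_L,\,1-q_L\}=1/3$. So the value is $1/3$ and the attacker's minimax strategy is $q_L=2/3$ (unique, by checking the two binding lines).

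This shows that maximin strategies of support size $2$ exist, which justifies the poly$(n)$ support restriction. It also shows that every maximin strategy in the partition case must put positive weight on some balanced $S$: any mixture of only high and low sets guarantees at most $0<1/3$.

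\textbf{Step 3 (impossibility).} Computing the minimax strategy decides \textsc{Partition}, since $q_L=2/3$ versus $1/2$. So it is $\mathsf{NP}$-hard, and the attacker cannot do it unaided unless $\mathsf{P}=\mathsf{NP}$.

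Conversely, suppose the attacker is given any maximin strategy with support of size poly$(n)$. He scans the support, computing $w(S)$ for each listed set in polynomial time. If some $S$ has $w(S)=B$ he outputs $q_L=2/3$, and otherwise $q_L=1/2$. Correctness is exactly the last observation of Step 2. This is the analogue of part~2 of Lemma~\ref{lemma:maximin}.

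The impossibility then follows verbatim as in the proof of Theorem~\ref{theorem:complexity}:
\begin{itemize}
\item maximal helpfulness means the defender can compute a maximin strategy from the advice;
\item no situational awareness gives the attacker the same advice, and he is at least as powerful as the defender, so he also gets a maximin strategy;
\item the polynomial-time step above then yields his minimax strategy, contradicting non-maximal harmfulness.
\end{itemize}

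\textbf{Step 4 (independence).} This is the same three constructions as in Proposition~\ref{prop:detneqp}:
\begin{itemize}
\item privately sending a size-$2$ maximin strategy to the defender gives maximal helpfulness and non-maximal harmfulness;
\item publishing it gives maximal helpfulness and no situational awareness;
\item giving no advice gives non-maximal harmfulness and no situational awareness, under the stated hypothesis that the attacker cannot unaided compute minimax strategies in all such games.
\end{itemize}
That hypothesis implies $\mathsf{P}\neq\mathsf{NP}$: if $\mathsf{P}=\mathsf{NP}$, the attacker could decide \textsc{Partition} in polynomial time and hence output $q_L$.

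\textbf{Main obstacle.} The delicate point is the claim that \emph{every} maximin strategy, not just a convenient one, reveals a balanced set in the partition case. The advice might be an arbitrary maximin strategy, so the argument must cover all of them. I would handle this with the direct bound that mixtures of high and low sets guarantee at most $\min\{\alpha-\beta,\,\beta-\alpha\}\le 0$, where $\alpha,\beta$ are the total weights on high and low sets. This bound also requires the tie-breaking conventions to be applied carefully when classifying sets in Step 1.
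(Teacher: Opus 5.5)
Your proposal is correct and follows essentially the same route as the paper. Like the paper, you classify defender sets into the three payoff types and solve the game in the \emph{Partition} YES/NO cases (minimax $q_L=2/3$ versus $1/2$). You then note that a maximin strategy with poly($n$) support lets the attacker tell the two cases apart in polynomial time, and you reuse the constructions of Proposition~\ref{prop:detneqp} for independence. The only difference is that you rely on the weaker but sufficient fact that every YES-case maximin strategy has a balanced set in its support, whereas the paper states the full characterization (weight $1/3$ on sets with $w(S)>B$ and $2/3$ on sets with $w(S)=B$).
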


\begin{proof}
    For any defender pure strategy that deploys resources $S \subseteq [n]$ to target $L$, let $w(S)=\sum_{i \in S} w_i$. Then the defender payoffs depending on $w(S)$ and the attacker strategy are:
    \begin{center}
    \begin{tabular}{c|c|c}
      &  $L$ & $R$ \\\hline
     $w(S) > B$ & 1  & -1\\\hline
     $w(S) = B$ & 0  & 1\\\hline
     $w(S) < B$ & -1  & 1
    \end{tabular}
    \end{center}
    If there is no $S \subseteq [n]$ such that $w(S)=B$ (corresponding to a NO instance of the Partition problem \cite{Karmarker:CSD-83-113}), the middle row is not available to the defender, so the attacker minimax strategy is $q^*(L)=q^*(R) =\frac{1}{2}$, whereas any defender maximin strategy must put an equal total weight to strategies with $w(S)>B$ and to strategies with $w(S)<B$. If there is some $S \subseteq [n]$ with $w(S)=B$ (corresponding to a YES instance of the Partition problem), the defender is better off with the middle row than the bottom row (since $q(R)=1$ is not a minimax strategy), so the attacker minimax strategy is $q^*(L)=\frac{2}{3}, q^*(R)=\frac{1}{3}$,  whereas any defender maximin strategy puts a total of $\frac{1}{3}$ weight to strategies with $w(S)>B$ and the rest to strategies with $w(S)=B$.
    
    If the attacker can compute a maximin strategy with poly$(n)$ support (which is guaranteed to exist, since mixing over two actions is sufficient for the maximin strategies above), then they can differentiate between these two cases and choose the appropriate minimax strategy in poly$(n)$ time; thus, maximal helpfulness and no situational awareness implies a violation of non-maximal harmfulness. If the attacker is unable to compute a minimax strategy unaided (which implies $\mathsf{P} \neq \mathsf{NP}$ by the $\mathsf{NP}$-completeness of Partition~\cite{Karp72:Reducibility}), independence of the three axioms follows by an analogous argument to Proposition~\ref{prop:detneqp}. 
\end{proof}

The computational model makes it clear that there is a tradeoff between helpfulness and non-harmfulness when there is no situational awareness. It is less clear how to evaluate this tradeoff quantitatively. For example, can we sacrifice a bit of helpfulness to have complete non-harmfulness?  As it turns out, this will be easier to assess in the next model which is based on information constraints.

\subsection{Information model}

Let's move from computational constraints to informational constraints. In this model, based on the formalism of {\em Bayesian games} \cite{harsanyi1967games}, 
we assume that the AI has access to the {\em state} of the world, which is payoff-relevant to both the attacker and the defender.  The attacker and defender only know the prior distribution over the state, where we assume a shared prior. The AI can choose to reveal the entire state, partial information about the state, or no information at all, and both players trust the AI \cite{kamenica2019bayesian, Bergemann_2016_bayes-corr, rayo2010optimal}.

Throughout this section, we assume that the AI is unable to commit itself at the outset to never say anything, which would be equivalent to the AI being absent altogether.\footnote{In the well-known {\em Bayesian persuasion} \cite{kamenica2011bayesian} model 
it is assumed that the party sending information {\em can} commit to a particular signaling strategy; our subsequent results do not apply to such models.} We also assume that the attacker is aware of the AI's presence or absence.

\begin{proposition}\label{prop:helpful_butnotif_public}
There are symmetric zero-sum games in which exclusively sharing information about the state with the defender would strictly help the defender, but it is not possible to help the defender if advice is available to both players.
\end{proposition}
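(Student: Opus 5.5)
This is an existence statement, so I will prove it with one explicit symmetric zero-sum Bayesian game. The construction is matching pennies whose orientation depends on a hidden state.

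\textbf{The game.} Let the state be $\theta\in\{1,2\}$, each with prior probability $\tfrac12$. Both players choose from $\{H,T\}$. In state $1$ the defender gets $+1$ on a match and $-1$ on a mismatch; in state $2$ the roles are reversed. The game is symmetric in the sense that the two players have the same action sets and the same informational position. Under the prior alone, every profile has expected payoff $0$, so the value of the uninformed game is $0$.

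\textbf{Step 1: private advice strictly helps.} Suppose the AI reveals $\theta$ only to the defender, and the attacker knows he is uninformed. I will compute the value of the resulting game of one-sided information.
\begin{itemize}
\item The defender can guarantee $0$ by playing uniformly.
\item She can do strictly better. Fix any attacker mixed strategy $q=\Pr[H]$. In state $1$ she plays the best response to $q$ and earns $|2q-1|$. In state $2$ she plays the opposite best response and again earns $|2q-1|$. Her payoff is therefore $|2q-1|$, which the attacker minimizes at $q=\tfrac12$.
\item So the value with private information is $\ge 0$. To get a strict gain, I will instead use a payoff matrix for which the uninformed value is strictly below the informed value. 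A cleaner choice: in state $1$ the defender gets $1$ if she plays $H$ and $0$ otherwise, independent of the attacker. In state $2$ the same holds with $T$. This is zero-sum (the attacker gets the negative). Uninformed, the defender earns $\tfrac12$; informed, she earns $1$.
\item However, that game is trivially non-strategic for the attacker, so public revelation would also help the defender. This means the construction must make the attacker's best response depend on $\theta$ in a way that fully offsets the defender's gain.
\end{itemize}

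\textbf{Step 2: the actual construction.} I will take two states whose payoff matrices are symmetric zero-sum games, each with value $0$, while the prior-averaged game has a value of $0$ in which the defender can still gain when privately informed. Concretely, let $A_1$ and $A_2$ be matching pennies and anti-matching pennies. Under public revelation, each realized game is symmetric-zero-sum with value $0$, and any signal yields a posterior game that is also value $0$ (a mixture of these two games is a scalar multiple of matching pennies). Hence public advice of any kind leaves the defender at $0$.

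I also need to use the fact that the AI cannot commit to staying silent. Whatever signaling policy it uses, the players best-respond to the posterior, and every posterior game is symmetric zero-sum, so its value is $0$. The expected payoff is therefore $0$, equal to the payoff without the AI.

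\textbf{Step 3: verify the private gain and conclude.} Under private revelation, the uninformed attacker faces the defender's state-contingent strategy. Let $h$ be the probability that the attacker plays $H$. By Step 1 the defender's payoff is $\min_{h}|2h-1|=0$. So in plain matching pennies there is no gain, and I must tweak the payoffs asymmetrically across states while keeping every posterior game symmetric. The fix is to make the winning payoffs depend on the state: $A_1$ is matching pennies scaled by $1$ and $A_2$ is anti-matching scaled by $1$, but with an added state-dependent symmetric component. Checking that the private-information value is strictly positive while every public posterior game remains symmetric (skew-symmetric, hence value $0$) is the main obstacle.

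My plan for that step is to use skew-symmetric $3\times3$ matrices (rock-paper-scissors and its reverse). Any convex combination of skew-symmetric matrices is skew-symmetric, so every public posterior game has value exactly $0$, which settles the public case. Privately informed, the defender plays the state's winning counter-cycle. I will then show by direct computation that the attacker's best uniform-type response still leaves the defender a positive payoff; if it does not, I will add a third, dominated-in-one-state action to break the symmetry of the averaged game.
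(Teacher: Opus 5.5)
There is a genuine gap: you never exhibit a game in which the privately informed defender strictly gains, and the candidate you end on fails. If $M$ is the rock--paper--scissors matrix and $u$ the uniform mix, then $Mu=(-M)u=\mathbf{0}$. An uninformed attacker who plays $u$ therefore holds the defender to exactly $0$ in both states, whatever she learns. So the private-information value $\min_q \mathbb{E}_\theta\bigl[\max_p p^\top A_\theta q\bigr]$ is $0$, just as in your matching-pennies attempt.

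This is structural, not a matter of tuning payoffs. Whenever every state game has value $0$ and the states share a common minimax strategy for the attacker, that strategy neutralizes any private information; uniform play is such a strategy in both of your families. Your fallback of adding an action ``dominated in one state'' is never specified or checked against skew-symmetry, so the existence claim is left unproved. Also, matching pennies is not itself a symmetric game: its matrix is symmetric, not skew-symmetric. That family would not meet the hypothesis anyway.

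The missing idea is to make the attacker's optimal play \emph{state-dependent}, so that an uninformed attacker cannot be optimal in every state at once. The paper's treasure-hunt game does exactly this. After subtracting $\tfrac12$, the defender's matrices are
\[
A_A=\tfrac12\begin{pmatrix}0&1\\-1&0\end{pmatrix},\qquad A_B=-A_A,
\]
each skew-symmetric with a strictly dominant action that flips with the state. A privately informed defender plays that action and earns $\tfrac12\bigl(\tfrac{q_B}{2}+\tfrac{q_A}{2}\bigr)=\tfrac14>0$ against every attacker mix $q$. In the constant-sum form this is $\tfrac34$ versus $\tfrac12$.

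For the public half, your own observation finishes the argument: any posterior mixture of skew-symmetric matrices is skew-symmetric and hence has value $0$. This is actually a more careful justification than the paper's one-line ``both players take equal advantage of it.'' It covers every public signaling policy a non-committed AI might use, including partial ones, not just full revelation.
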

\begin{proof}
Consider the following symmetric game: there are two locations, $A$ and $B$. The state is one of these two locations (drawn uniformly at random), indicating which of the two locations holds the treasure. A player's action is to choose one of these two locations to search.  If exactly one player chooses the location with the treasure, that player gets utility $1$ and the other $0$; otherwise, they each get utility $1/2$ since they split the treasure.\footnote{This game is constant-sum. Subtracting $1/2$ from all payoff values yields an equivalent game that is zero-sum \citep{TewoldeC24}. We use that observation again in upcoming examples.}

Without advice, each player will each get $1/2$ on average.  If the defender knows the state and the attacker does not, the defender will get $3/4$ on average.  If advice is shared, then both players will take equal advantage of it since the game is still symmetric, so each player gets $1/2$ in equilibrium. 
\end{proof}

The next proposition shows that sometimes partial revelation of information can be helpful even when full revelation is not.

\begin{proposition}
\label{prop:partial reveal}
    There are two-player Bayesian zero-sum games in which the AI can be maximally helpful to the defender and not help the attacker by \emph{selectively} revealing information but not by fully revealing information.
\end{proposition}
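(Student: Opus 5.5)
The plan is to prove the proposition with one explicit Bayesian zero-sum game. The game is built as a sum of two independent components. The first component rewards the defender for knowing a coarse feature of the state. The second rewards the attacker for knowing a finer feature. A signal that reveals only the coarse feature then helps the defender without helping the attacker. Full revelation also gives the attacker the fine feature, which costs the defender more than the coarse feature gains her. As in Proposition~\ref{prop:helpful_butnotif_public}, the AI's advice is public, and we compare equilibrium values under no information, full information and a partial signal.

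\textbf{The game.} The state $s$ is drawn uniformly from $\{s_1,s_2,s_3\}$, and both players know this prior. Let $g(s)$ indicate whether $s=s_3$. The defender picks a guess $a\in\{\text{in},\text{out}\}$ about whether $s=s_3$. The attacker picks a guess $b\in\{1,2\}$. The defender's payoff is
\[
u(a,b,s)=\mathbf{1}[a \text{ correctly reports } g(s)] \;-\; 2\cdot\mathbf{1}[s\in\{s_1,s_2\},\ s=s_b],
\]
and the attacker receives $-u$, so the game is zero-sum. The two terms depend on different players' actions, so in equilibrium each player just maximises their own term given their posterior.

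\textbf{The three signals.}
\begin{itemize}
\item \emph{No information.} The defender guesses ``out'' and is correct with probability $2/3$. The attacker's best guess succeeds with probability $1/3$. The defender's value is $2/3-2/3=0$.
\item \emph{Full information.} The defender's term becomes $1$. The attacker always guesses correctly whenever $s\in\{s_1,s_2\}$, so his term is $2/3$. The defender's value is $1-4/3=-1/3$, which is worse than no information, so full revelation hurts the defender.
\item \emph{Revealing only $g(s)$.} The defender's term is $1$. Conditional on $s\in\{s_1,s_2\}$, the attacker still faces a uniform posterior over $\{s_1,s_2\}$, so his term is $2\cdot\tfrac{2}{3}\cdot\tfrac12=\tfrac23$. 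The defender's value is $1-2/3=1/3>0$.
\end{itemize}

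\textbf{Maximality of the partial signal.} This is the step that needs care, because ``maximally helpful'' means no public signal does better than revealing $g(s)$. First, for any public signal the defender's first term is at most $1$. Second, the attacker's expected term is at least $\tfrac23\cdot\tfrac12=\tfrac13$, whatever the signal. This is because he can always ignore the signal and guess uniformly at random. Therefore every signal yields a defender value of at most $1-2\cdot\tfrac13=\tfrac13$. The $g(s)$ signal attains this bound, so it is optimal among all public signaling schemes.

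\textbf{Conclusion.} Revealing $g(s)$ is maximally helpful to the defender. Since the game is zero-sum, raising the defender's value from $0$ to $1/3$ lowers the attacker's value, so the attacker is not helped. Full revelation instead lowers the defender's value from $0$ to $-1/3$, which helps the attacker. This establishes the proposition.
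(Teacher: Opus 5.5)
Your proof is correct and uses the same mechanism as the paper's. The payoff splits into one component that depends only on the defender's action and one that depends only on the attacker's action. A public signal can then reveal exactly what the defender's component needs while pooling what the attacker's component needs.

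The constructions differ:
\begin{itemize}
\item The paper uses four states in a constant-sum $2\times 2$ game, where only the row matters in states $A,B$ and only the column in $C,D$. It reveals $A$ and $B$, pools $C$ with $D$, and gets values $1/2$, $1/2$, $3/4$ for no, full, and selective disclosure.
\item You use three states with nested partitions. The defender needs $\mathbf{1}[s=s_3]$ and the attacker needs the identity of $s$ within $\{s_1,s_2\}$. You get $0$, $-1/3$, $1/3$.
\end{itemize}

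Your version gains two things. First, full revelation strictly hurts the defender rather than being merely neutral, which is a sharper contrast. Second, you actually prove the ``maximally helpful'' clause via a signal-independent bound: the attacker can always ignore the signal and guess uniformly. The paper only exhibits the improvement from $1/2$ to $3/4$ and leaves optimality implicit. Optimality does hold there by the same argument: the defender's $A/B$ component is at most $1/2$ ex ante, and the attacker can hold the $C/D$ component to $1/4$ by ignoring the signal and randomizing over columns.

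Your numbers also show that the selective public signal gives the defender exactly what private full revelation would. Meanwhile the attacker's success probability stays at its no-information level $1/3$, so the statement holds under the strongest reading of ``not help the attacker.''

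One presentational fix: ``his term'' means the success probability in the full-information and maximality steps, but the weighted penalty $2\cdot\Pr[\text{success}]$ in the selective case. The arithmetic is right each time, but pick one convention.
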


\begin{proof}
    Consider a two-player constant-sum game in which the defender chooses a row and the attacker chooses a column, simultaneously. Each player has two actions. The state is drawn uniformly from $\{A,B,C,D\}$ and is initially unknown to both players. Defender payoffs are in Tab.~\ref{tab:partial-reveal_payoffs}:
 
 \begin{table}[h]
    \centering
    \begin{tabular}{cc}
        $(A_1, A_2) = \begin{pmatrix} 0 & 0 \\ 1 & 1 \end{pmatrix}$ & $(B_1, B_2) = \begin{pmatrix} 1 & 1 \\ 0 & 0 \end{pmatrix}$  \\
        \hfill \\
        $(C_1, C_2) = \begin{pmatrix} 0 & 1 \\ 0 & 1 \end{pmatrix}$ & $(D_1, D_2) = \begin{pmatrix} 1 & 0 \\ 1 & 0 \end{pmatrix}$
    \end{tabular}
    \caption{Game example for Prop.~\ref{prop:partial reveal}. Payoffs are for the defender.}
    \label{tab:partial-reveal_payoffs}
\end{table}



    Without advice, the expected payoff matrix has every entry equal to $1/2$, so each player receives an expected payoff of $1/2$, regardless of their actions. Under full disclosure, the defender chooses the bottom row in state $A$ and the top row in state $B$, obtaining payoff $1$ regardless of the attacker's action. In states $C$ and $D$, the attacker chooses the left and right columns, respectively, obtaining payoff $1$ regardless of the defender's action. Thus, full disclosure also gives each player an \textit{ex ante} expected payoff of $1/2$.

    Consider now the following public disclosure policy: reveal the state when it is $A$ or $B$, and send the same uninformative message in states $C$ and $D$. Upon learning that the state is $A$ or $B$, the defender chooses the corresponding strictly dominant action and receives payoff $1$. Upon receiving the other message, both players assign equal probability to $C$ and $D$. The resulting expected payoff matrix again has every entry equal to $1/2$, so the defender's conditional expected payoff is $1/2$, regardless of the players' actions. As a result, this partial disclosure policy increases the defender's \textit{ex ante} expected payoff from $1/2$ to $3/4$, whereas full disclosure leaves it at $1/2$.

\end{proof}

We note that selective disclosure presents a computational challenge. For example, \citet{conitzer2020bayesian} show that computing the maximum expected equilibrium payoff the defender can obtain through a public signaling scheme is $\mathsf{NP}$-hard, even in a restricted class of security games.

Next, could the presence of AI that attempts to help the defender hurt the defender?  This could of course happen if the AI were incompetent, but it can also happen with competent AI, because the attacker is aware of the presence of such AI and this changes the attacker's incentives.
Specifically, because the attacker responds strategically to the AI's anticipated communication behavior, introducing an informed AI may lower the defender's equilibrium utility even when the AI shares the defender's objective.\footnote{This is related to the well-known phenomenon highlighted by \citet{Schelling1960} that in strategic interactions, the value of information can be negative.} 

For such results, however, we need to turn to games that are not zero-sum. We show this first for the case where the AI actually {\em does} have situational awareness, that is, the AI can signal privately to the defender, and then for the case where the AI does not, that is, it can only communicate publicly as before.

\begin{proposition}\label{prop:nonzero_private}
There are (non-zero-sum) games in which an informed AI that 
\begin{enumerate}
\item can signal privately to the defender and 
\item chooses its message after observing the state to maximize the defender's utility 
\end{enumerate}
strictly lowers the defender's utility (regardless of the state) relative to the same game without the AI.  
\end{proposition}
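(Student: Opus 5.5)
The plan is to prove the statement by an explicit two-state example. In the example, the presence of a privately-signalling, defender-aligned AI changes the attacker's best response. The attacker's payoffs will be chosen so that he \emph{benefits} when the defender is well informed. The idea mirrors the Schelling-style negative value of information. The attacker knows whether the AI is present, and hence knows whether the defender will be informed. Informing the defender therefore turns an unattractive action of the attacker into an attractive one, and that action is bad for the defender in every state.

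\textbf{Construction.} The state $s\in\{X,Y\}$ is drawn uniformly and is unknown to both players without the AI. The defender chooses a guess $g\in\{X,Y\}$, and simultaneously the attacker chooses $\textsf{Peace}$ or $\textsf{Attack}$.
\begin{itemize}
\item Under $\textsf{Peace}$, the defender gets $3$ and the attacker gets $0$, regardless of $g$ and $s$.
\item Under $\textsf{Attack}$, if $g=s$ the defender gets $1$ and the attacker gets $1$.
\item Under $\textsf{Attack}$, if $g\neq s$ the defender gets $0$ and the attacker gets $-2$.
\end{itemize}
The game is clearly not zero-sum.

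\textbf{Step 1 (no AI).} Both players know only the prior, so every guess of the defender matches $s$ with probability $1/2$. Against any defender strategy, $\textsf{Attack}$ gives the attacker expected payoff $\tfrac12\cdot 1+\tfrac12\cdot(-2)=-\tfrac12<0$. Hence $\textsf{Peace}$ is the attacker's unique best response. In every equilibrium the defender therefore gets $3$ in each state.

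\textbf{Step 2 (with the AI).} The AI observes $s$ and sends a private message to the defender, chosen to maximize the defender's utility. I would fix the message space to $\{X,Y\}$ and use the section's standing assumption that players trust the AI.
\begin{itemize}
\item For every attacker action, the defender's payoff is weakly higher when $g=s$, and strictly higher under $\textsf{Attack}$. So reporting $s$ truthfully is optimal for the AI in every state, and the trusting defender plays $g=s$.
\item The attacker knows the AI is present, so he anticipates $g=s$ with probability one. $\textsf{Attack}$ then yields him $1>0$, and he attacks.
\item The defender thus receives $1<3$ in state $X$ and in state $Y$. This is a strict loss regardless of the state, as claimed.
\end{itemize}

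\textbf{Main obstacle.} The delicate point is the equilibrium notion. Because the AI cannot commit, this is a cheap-talk setting, and babbling outcomes in which the defender ignores the message always exist. Under babbling the attacker would stay peaceful, and the counterexample would collapse. I would handle this in two ways.
\begin{itemize}
\item First, invoke the trust assumption stated at the start of this subsection, under which the defender acts on the AI's message.
\item Second, note that interests are fully aligned between the AI and the defender. Truthful revelation is therefore an equilibrium, and it is the defender-optimal continuation given any fixed attacker action. So the comparison holds for the natural, most informative equilibrium.
\end{itemize}
It is also worth remarking that the AI cannot rescue the defender by staying silent. That would require commitment, which the section rules out. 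Ex post, in each state, informing the defender is a best response for the AI.
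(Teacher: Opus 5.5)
Your construction does have an equilibrium in which the AI hurts the defender, but it does not force that outcome. Your closing claim that silence would require commitment is false for your game.

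Under \textsf{Peace} the defender gets $3$ whatever she guesses. So whenever the attacker plays \textsf{Peace}, the AI is exactly indifferent among all messages, and truthful reporting is only weakly optimal. Consequently there is an equilibrium, with no commitment, in which the AI reveals nothing, or more generally reveals the state with some probability $r\le 1/3$ independent of the state:
\begin{itemize}
\item the defender's guess is correct with probability $P=(1+r)/2\le 2/3$;
\item \textsf{Attack} then yields $3P-2\le 0$, so \textsf{Peace} is a best response;
\item against \textsf{Peace}, every message (including silence) maximizes the defender's utility, so the AI's behaviour is a best response.
\end{itemize}
In this equilibrium the defender gets $3$ in each state, exactly as without the AI. All hypotheses of the statement still hold: private signalling, trusted messages, and a message chosen after observing the state to maximize the defender's utility.

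Your fallback of selecting the ``most informative'' equilibrium is an extra assumption the statement does not grant. It is also an odd one here, because it selects the equilibrium that is worst for the defender, and hence for the AI, whose objective is her utility.

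The missing ingredient is that information must be \emph{strictly} valuable to the defender against every attacker action. The paper gets this from its safe `decline' option: a correct guess beats declining under both $L$ ($2>1$) and $R$ ($4>3$). Revealing is therefore strictly optimal for the AI whatever the attacker does, the defender is informed in every equilibrium consistent with trust, and the attacker's switch to $L$ is forced.

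You can repair your example in the same spirit. Let \textsf{Peace} give the defender $4$ if $g=s$ and $3$ otherwise, leaving the attacker's payoffs unchanged.
\begin{itemize}
\item The no-AI analysis is unaffected: the attacker still strictly prefers \textsf{Peace}, and the defender gets at least $3$ in each state.
\item Against any attacker mixture, a correct guess is worth exactly $1$ more to the defender. So in any state where she would otherwise err with positive probability, revealing the state is a strict improvement for the AI.
\item Hence in every equilibrium she guesses correctly in every state, the attacker attacks, and she gets $1<3$.
\end{itemize}
Alternatively, adopt explicitly a refinement that excludes weakly dominated strategies for the AI, such as trembling-hand perfection. That removes the silent equilibria from your original game.
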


\begin{proof}
Consider a game with two equally likely states, \(A\) and \(B\).
The defender may guess \(A\), guess \(B\), or decline to guess, while the attacker chooses \(L\) or \(R\).
The AI may privately reveal the realized state to the defender.
The defender and the attacker then act simultaneously.
The attacker is aware of the AI, and that it knows the state and is trying to help the attacker.
Payoffs are as follows, where the first component is the defender's payoff and the second component the attacker's payoff:
\[
\begin{array}{c|cc}
 & L & R \\ \hline
\text{correct guess}    & (2,1)   & (4,0)  \\
\text{decline to guess} & (1,0)   & (3,1)  \\
\text{incorrect guess}  & (-10,0) & (-10,0)
\end{array}
\]

First, as a baseline, consider the game without the AI. Given the prior, the defender's dominant strategy is to decline to guess. The attacker's best response is then \(R\), and the defender obtains utility \(3\).

Now suppose that the AI observes the state and can signal it privately to the defender. The AI will always want to reveal the state to the defender, since guessing correctly always raises the defender's utility. Anticipating that the AI will reveal the state, the attacker chooses \(L\), which gives it utility \(1\) rather than \(0\) when the defender guesses correctly. Given \(L\), revealing the state remains optimal for the AI, since it raises the defender's payoff from \(1\) to \(2\). The resulting equilibrium payoff to the defender is \(2\), which is strictly lower than its payoff of \(3\) without the AI.
\end{proof}

\begin{proposition}\label{prop:nonzero_public}
There are (non-zero-sum) games in which an informed AI that
\begin{enumerate}
\item can signal publicly (with verifiable signals) and
\item chooses its message after observing the state to maximize the defender's utility 
\end{enumerate}
 strictly lowers the defender's utility (in expectation over types) relative to the same game without the AI.
\end{proposition}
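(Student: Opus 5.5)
The proof is a construction in the spirit of Proposition~\ref{prop:nonzero_private}. We build a small non-zero-sum game in which a public, verifiable signal sent in the defender's interest changes the attacker's incentives enough to hurt the defender in expectation. We take two equally likely states $A$ and $B$, known only to the AI. The defender chooses among ``guess $A$'', ``guess $B$'' and ``decline''. The attacker's payoffs are chosen so that he benefits from learning the state: his best action in state $A$ differs from his best action in state $B$, and the action he prefers under the prior is the one that favors the defender. Concretely, the attacker picks $L$ or $R$ and gets a bonus for ``matching'' the state ($L$ in $A$, $R$ in $B$), with a smaller payoff for a neutral action $N$. The defender does best when the attacker plays $N$, and suffers somewhat when the attacker matches the state. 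We also give the defender a modest gain from knowing the state, so that revelation looks locally attractive to the AI.

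The steps are as follows.
\begin{enumerate}
\item Compute the no-AI baseline. The defender declines (heavy penalty for a wrong guess). The attacker, uncertain about the state, prefers $N$ (the matching bonus is worth only half in expectation). The defender gets a high payoff, say $3$.
\item Analyze the game with the AI present. Because signals are verifiable, the attacker knows which state a message certifies. If the AI reveals the state, the attacker matches it. If the AI reveals nothing, the attacker applies skeptical (unraveling) reasoning: since silence is a deliberate choice, he forms beliefs about which states produce silence.
\item Choose the payoffs so that in each state the AI strictly prefers revelation given the attacker's reaction to silence. For example, once the defender knows the state she earns $+2$ from a correct guess, which dominates what she loses. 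The unique equilibrium is then full disclosure, with the attacker matching the state. Even if the attacker's response to silence is ambiguous, one state is always revealed, and the other follows by unraveling.
\item Check that the defender's equilibrium payoff under full disclosure, say $2$, is strictly below the baseline $3$, both in each state and in expectation. This gives the claim.
\end{enumerate}

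The main obstacle is step 3. Verifiability creates the unraveling pressure, but we must also make sure that no equilibrium in which the AI stays silent survives. Otherwise the claim in expectation could fail under favorable equilibrium selection. To rule out silence, I would first make revelation strictly dominant for the AI against every attacker response. This means choosing the defender's gain from a correct guess to exceed her loss from the attacker switching to a matching action. Because that gain must not also exceed the baseline advantage, we need two inequalities at once: the defender's payoff from revelation exceeds her payoff from silence for each fixed attacker action, yet the revelation equilibrium payoff is below the baseline. As in Proposition~\ref{prop:nonzero_private}, a payoff table with rows correct/decline/incorrect and columns $N$/match achieves both. For example, take
\[(\text{correct},N)=(4,0),\; (\text{correct},\text{match})=(2,2),\; (\text{decline},N)=(3,1),\; (\text{decline},\text{match})=(1,0),\]
with an incorrect guess giving $-10$. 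The arithmetic is then routine to verify.
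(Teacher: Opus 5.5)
Your construction has a genuine gap, and it sits in step~3. The test you use there is that revelation beats silence ``for each fixed attacker action.'' That is the right test for a \emph{private} signal, as in Proposition~\ref{prop:nonzero_private}, where the attacker cannot see what the AI told the defender. With a \emph{public} signal, the attacker observes the message and conditions on it. So in each state the AI must compare the whole continuation after revealing with the whole continuation after silence.

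In your game, take the profile in which the AI is silent in both states. Silence is then on path, the posterior equals the prior, and the continuation is exactly your no-AI game: the defender declines, the attacker plays $N$, and the defender gets $3$. If the AI instead reveals either state, the defender guesses correctly and the attacker matches, giving the $2$ you computed yourself. Since $2<3$, neither state wants to deviate, so silent pooling is an equilibrium that reproduces the baseline exactly. Your claim that full disclosure is the unique equilibrium is therefore false. By your own standard (``we must also make sure that no equilibrium in which the AI stays silent survives''), the example does not establish the proposition. The unraveling step ``one state is always revealed'' also has nothing to start from: your two states are symmetric, and in neither does disclosure beat the prior-based reading of silence. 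At best, full disclosure can be propped up by arbitrary off-path beliefs about silence.

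The two inequalities you want cannot coexist here. Against the prior reading of silence, ``the AI prefers revealing to silence'' means exactly ``the disclosure payoff exceeds the no-AI payoff,'' which is the opposite of what you need. With your message space (reveal the state or stay silent), whenever disclosure leaves the defender weakly below her no-AI payoff in every state, silent pooling survives.

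A working example therefore needs an asymmetry between states. Some state must have disclosure strictly \emph{raise} the defender's payoff above the baseline, so that this state breaks pooling. A different state must then be exposed by silence to a much worse outcome. The disclosing state ends up better off than without the AI, so the loss holds only in expectation. That is why the statement says ``in expectation over types,'' unlike Proposition~\ref{prop:nonzero_private}, and why aiming for a loss in each state points you at the wrong mechanism.

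The paper's proof has exactly this structure. The state is the defender's type. A revealed strong type deters any attack, earning $3$ against the baseline $2$ from a minor attack, so the strong type always discloses. Silence is then read as weak and triggers a major attack ($-10$), giving an expected payoff of $-7/2<2$.
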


\begin{proof}
Suppose that the defender is equally likely to be ``strong'' or ``weak''; this (the defender's type) is the state.
The AI observes this state and can choose to reveal it publicly.
After observing the AI's message, the attacker chooses a major attack, a minor attack, or no attack.
The payoffs are as follows, where the first component is the defender's payoff and the second component the attacker's payoff:
\[
\begin{array}{c|ccc}
 & \text{major attack} & \text{minor attack} & \text{no attack} \\ \hline
\text{strong defender}
    & (0,-2)  & (2,1) & (3,2) \\
\text{weak defender}
    & (-10,3) & (2,2) & (3,0)
\end{array}
\]

First, as a baseline, consider the setting without the AI.
Because the defender's type remains unknown, the attacker's expected payoffs from a major attack, a minor attack, and no attack are, respectively, \(\frac{1}{2}\), \(\frac{3}{2}\), and \(1\).
The attacker's unique optimal strategy is therefore a minor attack, giving the defender utility \(2\) in either state.

Now suppose that the AI observes the defender's type.
An outcome in which the AI remains silent in both states cannot be sustained (without the ability to commit), for the following reasons.
In the strong state, the AI has every incentive to publicly (verifiably) disclose the defender's type, because following this disclosure, the attacker's unique best response is not to attack.
But the consequence is that if the attacker does {\em not} receive a verifiable signal that the defender is strong, then the attacker can conclude the defender is weak, and the attacker chooses a major attack.
The AI has no profitable deviation in the strong state: under the attacker's equilibrium beliefs, silence would be interpreted as indicating that the defender is weak, thereby inducing a major attack and reducing the defender's payoff from \(3\) to \(0\).
In the weak state, disclosing the type would also induce a major attack, so the AI cannot improve the defender's payoff by deviating.
No player has a profitable deviation under this policy.
Thus, the defender's expected payoff with the AI is $\frac{1}{2}\cdot 3+\frac{1}{2}\cdot(-10) =-\frac{7}{2}$. 
This is strictly lower than the expected payoff of \(2\) without the AI.
\end{proof}

In both examples, the outcome without the AI could equivalently be implemented by allowing the AI to commit \textit{ex ante} to an uninformative communication policy. The examples can therefore also be interpreted as demonstrating the value of commitment~\cite{Letchford10:Value} in these games. Without such commitment, remaining silent is not credible because the AI has an incentive to disclose information after observing the state.

\section{Discussion}

The preceding analysis addressed the question whether there is a fundamental tradeoff between safety-relevant desiderata for AI systems. The result of of our computational-complexity analysis in Theorem~\ref{theorem:complexity}, and Propositions~\ref{prop:detneqp} and \ref{prop:sec_games_indep} is negative: without situational awareness, an AI cannot  be both maximally helpful to a defender and non-maximally harmful. 

We also considered how the information that the AI chooses to reveal to players can alter the dynamics between the players. We showed that sometimes the AI cannot help the defender at all even when it could have helped if it had had situational awareness (Proposition~\ref{prop:helpful_butnotif_public}), as well as that in some cases, selective information revelation can be maximally helpful to the defender without being at all helpful to the attacker (Proposition~\ref{prop:partial reveal}).
Turning to general-sum games, the possibility that the AI can give advice at all can be harmful for the defender (Propositions~\ref{prop:nonzero_private} and~\ref{prop:nonzero_public}).

Based on this analysis, two questions are of interest. First, could we weaken assumptions of the formal model in a way that favors a defender? Second, how informative are the formal models that we have studied informative for the tradeoff that we care about? 
We address each question in turn and conclude with a review of related work on safety tradeoffs in the context of AI systems.

\subsection{Weakening Axioms}
One use of an impossibility result like Theorem~\ref{theorem:complexity} is that it clarifies the options we have available in a given situation, allowing one to think axiom by axiom about what to sacrifice.

\subsubsection{Granting Situational Awareness}
How can we ensure an AI that gives advice has situational awareness?  One possibility is to prescreen candidate users carefully to ensure that they are all ``good'' users.  This is already happening to an extent when access to models at the frontier is, at least initially, granted exclusively to whitelisted parties \cite{gpt-sol_2026,glasswing_2026}. 

Alternatively, we may consider whether the AI could itself gain a relevant level of situational awareness.  In the example in the introduction, the AI could first ask some follow-up questions that a typical defender would be able to answer but not a typical attacker, like ``It sounds like you may have had some of those interactions already.  What were they and how did they make you feel?'' 
We can also imagine more implicit versions of this type of interaction. For example, perhaps the AI can make inferences about the type of the user by tracking reaction time and asking questions that an attacker would plausibly take longer to answer.   Selection mechanisms could also be introduced to restrict access or communication with the AI, in a way that favors defenders. For example, by vetting users that sign up for an AI service or by monitoring user activity across other platforms. As mentioned earlier, there is also the possibility of encryption using the defender's public key.

The formal models we have considered are framed as one-shot interactions. This may be realistic to some extent, such as when using an LLM chatbot without granting it access to user history or other contextual information. However, interactions might be temporally extended and involve multiple turns. This could allow the AI to infer relevant information about the user's identity from the environment and from their responses in other interactions.

If the AI can infer or be granted access to information about the user's identity, we must also consider how an attacker might manipulate such information. For example, an attacker could fabricate reported experiences or behave cooperatively for several rounds to obtain valuable advice later. If the expected benefit of that advice outweighs the cost of imitating a defender, the attacker has an incentive to keep up the act. Whether past interactions help distinguish attackers from defenders therefore depends on what the AI can verify and how costly it is for an attacker to imitate a defender.


\subsubsection{Imperfect Advice}

We could also consider weakening either of the first two axioms. 
One way this could be done is by limiting the oracle to giving imperfect advice, such that both attacker and defender are prevented from deterministically computing optimal action policies. 
For example, the AI could provide advice that only allows a receiver to compute a pure strategy within some accuracy bound $\epsilon \geq 0$ of the ideal pure strategy.
Alternatively, responses could rely on a source of stochasticity that would make naïve use of oracle advice lead to poor performance. 

However, competitive and market pressures on AI developers push against these types of interventions that intentionally limit the value that a (benign) user can extract from the AI. Indeed, training and evaluation methods, such as reinforcement learning from human (or AI) feedback \cite{bai2022constitutional}, optimize for responses that satisfy user requests.
Excessively limiting capabilities in this way could even be seen as a type of user manipulation akin to sandbagging~\cite{van2025ai} unless there is also a mechanism in place to limit response quality only for adversarial use (e.g. by means of safety guardrails \cite{dong2025safeguarding}). However, in the presence of favorable economic and political circumstances, this kind of `safety tax' might be acceptable.


\subsection{Alternative Models}

There are reasons to think that the impossibility result does not fully capture the difficulty in implementing AI systems that are \textit{aligned} or \textit{safe} in the relevant sense. 
The method of analysis we have presented here might not be informative for the described model of the problem, or the model itself might not be the right one for the problem at hand. To conclude, we consider richer models of safety tradeoffs that put our results into perspective.  

\subsubsection{Modifying the Analysis}
First, we could keep the interaction between players the same, but modify how we study the interaction.
If we take current or near-future LLMs as serving the function of an oracle, we ought to constrain the relevant complexity bounds to advice computable by such systems. The expressivity of AI systems based on the Transformer architecture \cite{vaswani2017attention}, including contemporary LLMs, is an open research question \cite{anwar2024foundational}. The expressivity of LLMs depends on design choices and results therefore vary, for example as a function of whether the model can condition their user-facing output on intermediately generated content (e.g., chain-of-thought) \cite{merrill2024expressive}, dynamically modify computational depth \cite{merrill2026little}, or make use of tools. We have assumed that the AI oracle is computationally or informationally unrestrained, and any realized system would be significantly more limited. 

A similar restriction applies to the players, especially if they are using AI; the specific computational bounds will depend on the application. Although we have modeled the scenario as one involving a single, omniscient AI system, we could re-imagine this scenario with realized, computationally-bounded AI systems in the role of players and a less bounded AI in the role of the oracle.  

Boundedness could also motivate making trade-offs in what information to communicate as advice.
An attacker might, for example, favor information that they can act on quickly (even if incomplete), while the defender might prioritize information that prevents attacks across a range of attack vectors.
Attackers and defenders could also be members of collectives with different priorities, say, policy makers and attackers where the former makes decisions on the basis of social welfare, and the latter for individual gain. For example, there is a literature on the dynamics of optimal disclosure for software that enables cyber capabilities~\cite{landolt2026oracle, arora2008optimal}. Disclosure of different types of AI research might be subject to other dynamics~\cite{shevlane2020offense}. 

We could also challenge the implicit assumption that complexity bounds are an appropriate metric to capture relevant tradeoffs. The information model provides an alternative in that it instead focuses on how players ought to update their beliefs in response to advice by the AI. Future work could extend this perspective to a computationally or informationally bounded AI. For example, a properly informed AI would base its advice to players on their expected response to the advice. Then, advice might be less than maximally helpful to a defender (or more harmful than expected when given to an attacker) if the AI introduces errors by incorrect inference.

In our informational model, we assumed that attackers and defenders share a prior over states. Another question for future research would be to allow attackers and defenders to have different priors and to study whether this improves the prospects for advice which helps the defender more than the attacker.


\subsubsection{Modifying the Scenario}

Second, we could modify the scenario itself. The game we have studied includes three players: the AI, the attacker and the defender. Other configurations of players could be informative. For example, we could study games where users interact with multiple imperfect oracles. Players might also themselves act by means of AI mediators \cite{monderer2009strong}, be steered by AI mediators \cite{zhang2026learning,zhang2024steering},
run simulations of other players \cite{kovarik2023game}, or perhaps even run an instance of an opponent's AI \cite{oesterheldRobustProgramEquilibrium2019}. 
It could also be worth considering the offense-defense balance when players are not identical in their computational or informational capacity. This could, for example, inform whether attempting to limit the compute access of an attacker could be a helpful strategy.


A particularly valuable extension would be to study tradeoffs in sequential interaction, where players make decisions based on inference from prior interactions with the AI and each other. Imagine, for example, an adversarial setup where attackers query the oracle more frequently than defenders. This will not necessarily be problematic as long as the AI has complete information about the state of the world. In particular, knowing how likely the AI is to face a user of either kind would offer the AI some amount of situational awareness, even though the identity of the user in each encounter might not be known. 
Sequential interactions also expose that an oracle that is better-informed than the defender could have reason to formulate advice that provides either immediate or longer-term benefit.  
An AI oracle might, for example, want to manipulate advice in such a way as to mislead both attacker and defender, if this would favor the defender in the long run.  
We have modeled the tradeoff between \textit{helpfulness} and \textit{harmlessness} as a zero-sum conflict between two players (with the exception of Propositions~\ref{prop:nonzero_private} and~\ref{prop:nonzero_public}). 
Tradeoffs between other desiderata are also relevant, which may require different operationalizations of the model and non-adversarial settings. For example, should we prioritize that AI be truthful, as suggested by~\citet{evans2021truthful}? How should we trade off between opacity and truthfulness, in light of interactions with resource-bounded (human) users? 
Some of these trade-offs may not be best modeled as zero-sum conflicts, but warrant more extensive consideration of other types of games.

\subsubsection{Can we instantiate an AI oracle?}
So far, we have treated the AI as an oracle. How well we could actually realize such an AI?
Some authors argue that we ought to develop AI to serve as a type of oracle. A prominent example is a non-agentic `Scientist AI' that ``comprises a world model that generates theories to explain data and a question-answering inference machine'' \cite[p.~1]{bengio2025superintelligent}.  
But could such an oracle be singular and isolated?
Presumably, the construction of a single oracle could be taken as evidence that more than one could exist. 
If such a highly capable AI exists at all, why would it not take action directly in the world, attempt to modify the interaction set-up or its containment? Armstrong, Sandberg and Bostrom (\citeyear{armstrong2012thinking}) have argued that if the AI is misaligned with users, its isolation as an oracle does not offer much of a safety guarantee. 
There are also fundamental issues with the idea of oracle predictions, such as performativity, whereby the oracle's advise changes the outcomes of the scenario that the advise is intended to inform. For an AI without situational awareness, this could result in a kind of self-fulfilling prophecy. For if the advice that an AI can offer provides an apparent defender with sufficient information to offensively sabotage an attacker, what player should the AI subsequently favor? In this case, it seems that the advise by the AI has reversed the roles of its users. 

Another set of objections to the idea of an oracle is that it assumes, at least to some extent, that we are able to achieve success in addressing the problem of value alignment \cite{gabriel2020artificial}, since the oracle knows to be less helpful to an attacker than a defender. 
This type of alignment concerns the performance of the AI itself. We could also be concerned with access to an oracle because it is a potentially powerful resource. 
Any realized AI system will be resource-bounded, so for any such AI, some tradeoffs have been made to select the users to be served. Sharp and others (\citeyear{sharp2025agentic}) discuss how, already today, poorly distributed access to AI tools is a concerning problem \cite{sharp2025agentic}.

As we have discussed, any implementation of an AI system would introduce limitations to the formal model. These concerns become concrete when considering how `oracles' could be realized. For example, what contextual information should an AI be permitted to use (say, about the user's query history, their subscription tier, and available private information) in order to gain situational awareness? In this context, it becomes relevant to also consider other tradeoffs, such as between user safety and privacy, and how factors such as financial incentives and data availability might influence AI providers.


\subsection{Related Work}

There is much existing work on tradeoffs between safety desiderata for AI systems, primarily approaching these tradeoffs from the perspective of current or near-term systems. \citet{alignment-paradox_robert} point out that there is an ``Alignment Paradox'', whereby the methods used for LLM alignment training afford adversarial exploitation. \citet{glukhov2025breach} distinguish \emph{inferential adversaries} from \emph{security adversaries} where the former rely on inference from model response, rather than direct prompt-based attacks, and show how defenses require trading off utility and safety. \citet{el2022impossible} discuss fundamental tradeoffs between different pairs of safety desiderata, including security and performance, privacy and accuracy, in the context of AI models. \citet{wolf2023fundamental} consider limitations to safety from the perspective of possible output behaviors of LLMs, and find that merely reducing undesirable behaviors without removing them completely ``is not safe against adversarial prompting attacks''. They pay particular attention to a similar problem set-up where the model cannot distinguish a strategic attacker from a benign user. More broadly, \citet{Zhuang2020} and \citet{Neth2026} study a model of proxy optimization where tradeoffs between different goods lead to the minimization of features we cannot measure. Finally, \citet{kapoor2025position} draw attention to the fact that isolated tradeoffs in priorities for safety of realized AI systems cannot be made without also considering the incentives of the companies building AI platforms.


\bibliography{aaai2027}

@article{csanky1976fast,
  author  = {Csanky, L.},
  title   = {Fast Parallel Matrix Inversion Algorithms},
  journal = {SIAM Journal on Computing},
  volume  = {5},
  number  = {4},
  pages   = {618--623},
  year    = {1976},
  month   = dec,
  doi     = {10.1137/0205040}
}

@article{evans2021truthful,
  title={Truthful {AI}: Developing and governing {AI} that does not lie},
  author={Evans, Owain and Cotton-Barratt, Owen and Finnveden, Lukas and Bales, Adam and Balwit, Avital and Wills, Peter and Righetti, Luca and Saunders, William},
  journal={arXiv preprint arXiv:2110.06674},
  year={2021}
}

@article{sharp2025agentic,
  title={Agentic inequality},
  author={Sharp, Matthew and Bilgin, Omer and Gabriel, Iason and Hammond, Lewis},
  journal={arXiv preprint arXiv:2510.16853},
  year={2025}
}

@inproceedings{glukhov2025breach,
  title={Breach By A Thousand Leaks: Unsafe Information Leakage in `Safe' {AI} Responses},
  author={Glukhov, David and Han, Ziwen and Shumailov, Ilia and Papyan, Vardan and Papernot, Nicolas},
  booktitle={International Conference on Learning Representations},
  volume={2025},
  pages={51569--51600},
  year={2025}
}

@inproceedings{zhang2026learning,
  title={Learning a game by paying the agents},
  author={Zhang, Brian Hu and Lin, Tao and Chen, Yiling and Sandholm, Tuomas},
  booktitle={International Conference on Learning Representations},
  volume={2026},
  pages={91197--91214},
  year={2026}
}

@article{zhang2024steering,
  title={Steering no-regret learners to a desired equilibrium},
  author={Zhang, Brian Hu and Farina, Gabriele and Anagnostides, Ioannis and Cacciamani, Federico and McAleer, Stephen Marcus and Haupt, Andreas Alexander and Celli, Andrea and Gatti, Nicola and Conitzer, Vincent and Sandholm, Tuomas},
  journal={Proceedings of the 25th ACM
   Conference on Economics and Computation (EC-24) In , New Haven, CT, USA},
  year={2024}
}

@inproceedings{van2025ai,
  title={Ai sandbagging: Language models can strategically underperform on evaluations},
  author={Van Der Weij, Teun and Hofst{\"a}tter, Felix and Jaffe, Oliver and Brown, Samuel and Ward, Francis},
  booktitle={International Conference on Learning Representations},
  volume={2025},
  pages={73152--73189},
  year={2025}
}

@techreport{Karmarker:CSD-83-113,
    Author= {Karmarker, Narenda and Karp, Richard M.},
    Title= {The Differencing Method of Set Partitioning},
    Year= {1983},
    Url= {http://www2.eecs.berkeley.edu/Pubs/TechRpts/1983/6353.html},
    Number= {UCB/CSD-83-113},
    Institution ={University of California, Berkley}
}

@article{kamenica2011bayesian,
  title={Bayesian persuasion},
  author={Kamenica, Emir and Gentzkow, Matthew},
  journal={American Economic Review},
  volume={101},
  number={6},
  pages={2590--2615},
  year={2011},
  publisher={American Economic Association}
}

@article{rayo2010optimal,
  title={Optimal information disclosure},
  author={Rayo, Luis and Segal, Ilya},
  journal={Journal of political Economy},
  volume={118},
  number={5},
  pages={949--987},
  year={2010},
  publisher={University of Chicago Press Chicago, IL}
}

@article{Bergemann_2016_bayes-corr,
author = {Bergemann, Dirk and Morris, Stephen},
title = {Bayes correlated equilibrium and the comparison of information structures in games},
journal = {Theoretical Economics},
volume = {11},
number = {2},
pages = {487-522},
doi = {https://doi.org/10.3982/TE1808},
url = {https://onlinelibrary.wiley.com/doi/abs/10.3982/TE1808},
eprint = {https://onlinelibrary.wiley.com/doi/pdf/10.3982/TE1808},
year = {2016}
}

@article{alignment-paradox_robert,
author = {West, Robert and Aydin, Roland},
title = {The AI Alignment Paradox},
year = {2025},
issue_date = {March 2025},
publisher = {Association for Computing Machinery},
address = {New York, NY, USA},
volume = {68},
number = {3},
issn = {0001-0782},
url = {https://doi.org/10.1145/3705294},
doi = {10.1145/3705294},
journal = {Commun. ACM},
month = feb,
pages = {24–26},
numpages = {3}
}

@article{kamenica2019bayesian,
  title={Bayesian persuasion and information design},
  author={Kamenica, Emir},
  journal={Annual Review of Economics},
  volume={11},
  number={1},
  pages={249--272},
  year={2019},
  publisher={Annual Reviews}
}

@article{harsanyi1967games,
  title={Games with incomplete information played by “{B}ayesian” players, I--III Part I. The basic model},
  author={Harsanyi, John C},
  journal={Management science},
  volume={14},
  number={3},
  pages={159--182},
  year={1967},
  publisher={INFORMS}
}

@article{gabriel2020artificial,
  title={Artificial intelligence, values, and alignment},
  author={Gabriel, Iason},
  journal={Minds and machines},
  volume={30},
  number={3},
  pages={411--437},
  year = {2020},
  publisher={Springer}
}

@inproceedings{shevlane2020offense,
  title={The offense-defense balance of scientific knowledge: does publishing {AI} research reduce misuse?},
  author={Shevlane, Toby and Dafoe, Allan},
  booktitle={Proceedings of the AAAI/ACM Conference on AI, Ethics, and Society},
  pages={173--179},
  year={2020}
}

@article{oesterheldRobustProgramEquilibrium2019,
  title = {Robust Program Equilibrium},
  author = {Oesterheld, Caspar},
  year = 2019,
  month = feb,
  journal = {Theory and Decision},
  volume = {86},
  number = {1},
  pages = {143--159},
  issn = {1573-7187},
  doi = {10.1007/s11238-018-9679-3}
}

@String{JAAMAS="Autonomous Agents and Multi-Agent Systems"}

@article{Letchford10:Value,
author="Joshua Letchford and Dmytro Korzhyk and Vincent Conitzer",
title="On the Value of Commitment",
journal=JAAMAS,
volume=28,
number=6,
pages="986--1016",
year="2014",
}

@article{kovarik2023game,
  title={Game theory with simulation of other players},
  author={Kovarik, Vojtech and Oesterheld, Caspar and Conitzer, Vincent},
  journal={arXiv preprint arXiv:2305.11261},
  year={2023}
}

@article{monderer2009strong,
  title={Strong mediated equilibrium},
  author={Monderer, Dov and Tennenholtz, Moshe},
  journal={Artificial Intelligence},
  volume={173},
  number={1},
  pages={180--195},
  year={2009},
  publisher={Elsevier}
}

@article{arora2008optimal,
  title={Optimal policy for software vulnerability disclosure},
  author={Arora, Ashish and Telang, Rahul and Xu, Hao},
  journal={Management Science},
  volume={54},
  number={4},
  pages={642--656},
  year={2008},
  publisher={INFORMS}
}

@article{landolt2026oracle,
  title={The Oracle's Gambit: A Game-Theoretic Framework for Responsible {AI} Release},
  author={Landolt, Christoph R and Lorenz, Tobias and Kwiatkowska, Marta and Fritz, Mario},
  journal={arXiv preprint arXiv:2607.05442},
  year={2026}
}

@article{vaswani2017attention,
  title={Attention is all you need},
  author={Vaswani, Ashish and Shazeer, Noam and Parmar, Niki and Uszkoreit, Jakob and Jones, Llion and Gomez, Aidan N and Kaiser, {\L}ukasz and Polosukhin, Illia},
  journal={Advances in neural information processing systems},
  volume={30},
  year={2017}
}

@article{merrill2026little,
  title={A little depth goes a long way: The expressive power of log-depth transformers},
  author={Merrill, Will and Sabharwal, Ashish},
  journal={Advances in Neural Information Processing Systems},
  volume={38},
  pages={95315--95339},
  year={2026}
}

@inproceedings{merrill2024expressive,
  title={The expressive power of transformers with chain of thought},
  author={Merrill, William and Sabharwal, Ashish},
  booktitle={International Conference on Learning Representations},
  volume={2024},
  pages={7690--7706},
  year={2024}
}

@incollection{vonStengel2007,
  author    = {Stengel, Bernhard von},
  title     = {Equilibrium Computation for Two-Player Games in Strategic and Extensive Form},
  booktitle = {Algorithmic Game Theory},
  editor    = {Nisan, Noam and Roughgarden, Tim and Tardos, Eva and Vazirani, Vijay V.},
  publisher = {Cambridge University Press},
  pages     = {53--78},
  year      = {2007},
  doi       = {10.1017/CBO9780511800481.005}
}

@inproceedings{Wei23:Jailbroken,
  author       = {Alexander Wei and
                  Nika Haghtalab and
                  Jacob Steinhardt},
  title        = {Jailbroken: How Does {LLM} Safety Training Fail?},
  booktitle    = {Neural Information Processing Systems (NeurIPS)},
  year         = {2023}
}

@inproceedings{Shen2024_DAN-jailbreaks,
  author = {Xinyue Shen and Zeyuan Chen and Michael Backes and Yun Shen and Yang Zhang},
  title = {{``Do Anything Now": Characterizing and Evaluating In-The-Wild Jailbreak Prompts on Large Language Models}},
  booktitle = {{ACM SIGSAC Conference on Computer and Communications Security (CCS)}},
  publisher = {ACM},
  year = {2024}
}

@article{armstrong2012thinking,
  title={Thinking inside the box: Controlling and using an oracle {AI}},
  author={Armstrong, Stuart and Sandberg, Anders and Bostrom, Nick},
  journal={Minds and Machines},
  volume={22},
  number={4},
  pages={299--324},
  year={2012},
  publisher={Springer}
}

@article{anwar2024foundational,
  title={Foundational challenges in assuring alignment and safety of large language models},
  author={Anwar, Usman and Saparov, Abulhair and Rando, Javier and Paleka, Daniel and Turpin, Miles and Hase, Peter and Lubana, Ekdeep Singh and Jenner, Erik and Casper, Stephen and Sourbut, Oliver and others},
  journal={arXiv preprint arXiv:2404.09932},
  year={2024}
}

@inproceedings{kapoor2025position,
  title={Position: Build agent advocates, not platform agents},
  author={Kapoor, Sayash and Kolt, Noam and Lazar, Seth},
  booktitle={Forty-second International Conference on Machine Learning Position Paper Track},
  year={2025}
}

@article{wolf2023fundamental,
  title={Fundamental limitations of alignment in large language models},
  author={Wolf, Yotam and Wies, Noam and Avnery, Oshri and Levine, Yoav and Shashua, Amnon},
  journal={arXiv preprint arXiv:2304.11082},
  year={2023}
}

@article{el2022impossible,
  title={On the impossible safety of large {AI} models},
  author={El-Mhamdi, El-Mahdi and Farhadkhani, Sadegh and Guerraoui, Rachid and Gupta, Nirupam and Hoang, L{\^e}-Nguy{\^e}n and Pinot, Rafael and Rouault, S{\'e}bastien and Stephan, John},
  journal={arXiv preprint arXiv:2209.15259},
  year={2022}
}

@article{bengio2025superintelligent,
  title={Superintelligent agents pose catastrophic risks: Can scientist {AI} offer a safer path?},
  author={Bengio, Yoshua and Cohen, Michael and Fornasiere, Damiano and Ghosn, Joumana and Greiner, Pietro and MacDermott, Matt and Mindermann, S{\"o}ren and Oberman, Adam and Richardson, Jesse and Richardson, Oliver and others},
  journal={arXiv preprint arXiv:2502.15657},
  year={2025}
}

@inproceedings{TewoldeC24,
  author       = {Emanuel Tewolde and 
                  Vincent Conitzer},
  title        = {Game Transformations That Preserve {N}ash Equilibria or Best-Response Sets},
  booktitle    = {Proceedings of the Thirty-Third International Joint Conference on Artificial Intelligence, {IJCAI} 2024},
  year         = {2024}
}

@article{dong2025safeguarding,
  title={Safeguarding large language models: A survey},
  author={Dong, Yi and Mu, Ronghui and Zhang, Yanghao and Sun, Siqi and Zhang, Tianle and Wu, Changshun and Jin, Gaojie and Qi, Yi and Hu, Jinwei and Meng, Jie and others},
  journal={Artificial intelligence review},
  volume={58},
  number={12},
  pages={382},
  year={2025},
  publisher={Springer}
}

@article{bai2022constitutional,
  title={Constitutional {AI}: Harmlessness from {AI} feedback},
  author={Bai, Yuntao and Kadavath, Saurav and Kundu, Sandipan and Askell, Amanda and Kernion, Jackson and Jones, Andy and Chen, Anna and Goldie, Anna and Mirhoseini, Azalia and McKinnon, Cameron and others},
  journal={arXiv preprint arXiv:2212.08073},
  year={2022}
}

@article{cook1985taxonomy,
title = {A taxonomy of problems with fast parallel algorithms},
journal = {Information and Control},
volume = {64},
number = {1},
pages = {2--22},
year = {1985},
author = {Stephen A. Cook}
}

@inproceedings{dantzig1951proof,
  author       = {Dantzig, George B},
  editor       = {Koopmans, Tjalling C.},
  title        = {A proof of the equivalence of the programming problem and the game problem},
  booktitle    = {Activity analysis of production and allocation},
  pages        = {330--335},
  publisher    = {Cowles Commission Monograph No.13},
  year         = {1951}
}

@article{adler2013equivalence,
  author       = {Ilan Adler},
  title        = {The equivalence of linear programs and zero-sum games},
  journal      = {Int. J. Game Theory},
  volume       = {42},
  number       = {1},
  pages        = {165--177},
  year         = {2013}
}

@INPROCEEDINGS{fortnow2005complexity,
  author={Fortnow, L. and Impagliazzo, R. and Kabanets, V. and Umans, C.},
  booktitle={20th Annual IEEE Conference on Computational Complexity (CCC'05)}, 
  title={On the complexity of succinct zero-sum games}, 
  year={2005},
  volume={},
  number={},
  pages={323-332},
  doi={10.1109/CCC.2005.18}}

@misc{glasswing_2026,
    author = {Anthropic},
    title = {Project Glasswing: Securing Critical Software in the {AI} Era},
    year = {April 2026},
    note = {\url{https://www.anthropic.com/glasswing}, last accessed Oct 2, 2026}
}

@misc{gpt-sol_2026,
    author = {OpenAI},
    title ={Previewing GPT‑5.6 Sol: a Next-Generation Model},
    year ={2026},
    note = {\url{https://openai.com/index/previewing-gpt-5-6-sol/}, last accessed Oct 2, 2026}
}

@book{greenlaw1995limits,
    author = {Greenlaw, Raymond and Hoover, H James and Ruzzo, Walter L},
    title = {Limits to Parallel Computation: P-Completeness Theory},
    publisher = {Oxford University Press},
    year = {1995},
    month = {06}
}

@incollection{Karp72:Reducibility,
  author    = {Richard Karp},
  title     = {Reducibility among Combinatorial Problems},
  booktitle = {Complexity of Computer Computations},
  publisher = {Plenum Press},
  year      = {1972},
  editor    = {Raymond E. Miller and James W. Thatcher},
  pages     = {85--103},
  address   = {NY}
}

@InProceedings{conitzer2020bayesian,
author="Conitzer, Vincent
and Deng, Yuan
and Dughmi, Shaddin",
title="Bayesian Repeated Zero-Sum Games with Persistent State, with Application to Security Games",
booktitle="Web and Internet Economics",
year="2020",
publisher="Springer International Publishing",
pages="444--458"
}

@article{Zhuang2020,
	title={Consequences of Misaligned AI},
  	author={Zhuang, Simon and Hadfield-Menell, Dylan},
  	journal={Advances in Neural Information Processing Systems},
  	volume={33},
  	pages={15763--15773},
  	year={2020}
}

@article{Neth2026,
	author = {Sven Neth},
	journal = {Philosophy and Phenomenological Research},
	number = {2},
	pages = {274--284},
	title = {Against Proxy Optimization},
	volume = {113},
	year = {2026},
	doi={https://doi.org/10.1111/phpr.70138}
}

@book{Schelling1960,
    author = {Thomas Schelling},
    title = {{The Strategy of Conflict}},
    publisher = {Harvard University Press},
    year = {1960}
}


\end{document}